\begin{document}
\mainmatter

\title{A note on Thue games}

\titlerunning{A note on Thue games} 

\author{Robert Merca\c{s}\thanks{Work supported by the P.R.I.M.E. programme of  DAAD co-funded by BMBF and EU's 7th Framework Programme (grant 605728).
} \and Dirk Nowotka}

\authorrunning{Merca\c{s} \and Nowotka}

\institute{Department of Computer Science, Kiel University, Germany,
\email{RobertMercas@gmail.com, dn@informatik.uni-kiel.de}
}

\maketitle

\begin{abstract}
In this work we improve on a result from~\cite{GryKosZma15}. In particular, we investigate the situation where a word is constructed jointly by two players who alternately append letters to the end of an existing word. One of the players (Ann) tries to avoid (non-trivial) repetitions, while the other one (Ben) tries to enforce them. We show a construction that is closer to the lower bound showed in~\cite{GryKozMic13} using entropy compression, and building on the probabilistic arguments based on a version of the Lov\'asz Local Lemma from~\cite{Peg11}. We provide an explicit strategy for Ann to avoid (non-trivial) repetitions over a $7$-letter alphabet.
\end{abstract}

\section{Results}
In~\cite{GryKosZma15} the authors give an explicit strategy for Ann to win the game by considering two cases separately. Namely the avoidability of non-trivial squares of words with odd length and the avoidability of squares of words with even length. Since both these strategies imply the use of a ternary alphabet, the result is just a cartesian product of the two disjoint alphabets.

For the avoidability of squares of even length, the authors propose that Ann uses the letters of the well known Thue-Morse word~\cite{Thu12}, in the order of their appearance. We commit to the same strategy as it seems the most viable. 

For the other case, they show that if Ann choses a favourite letter which she continues to repeat as long as Ben does not repeat it, this avoids non-trivial squares of words of odd length, as long as Ann changes her choice of a favourite letter as soon as Ben repeats it, to the letter that Ben has not used in his previous two choices. We will tweak a bit this part of the strategy.

Our strategy follows the same lines as that from~\cite{GryKosZma15}, but with a refined counting. In particular, we now already consider the pairs of letters, and not only a partial alphabet for each of the situations. Therefore, we consider pairs of letters from $\{\textbf 0,\textbf 1\} \times \{\textbf a,\textbf b,\textbf c\}$, where the second positions in the pair will render for Ann the Thue-Morse word. Furthermore, we will consider an extra letter $(\textbf 2,\textbf d)$ which has a unique first and second component among the others.
Ann will still chose a favourite letter, but only for the first position of the pair, while she will adhere to the previous restrictions imposed by the Thue-Morse words, for the second component. She will repeat the letter for the first component as long as Ben has not made the same choice for the first component, previously.

First observe that according to the previous explanations, our considered alphabet is formed of pairs of letters, namely:

$$\Sigma = \{(\textbf 0,\textbf a), (\textbf 0,\textbf b), (\textbf 0,\textbf c), (\textbf 1,\textbf a), (\textbf 1,\textbf b), (\textbf 1,\textbf c), (\textbf 2,\textbf d)\}.$$

The following algorithm describes the game based on Ann's  strategy of choosing a favourite, when she starts the game (the other situation is similar):

\begin{algorithm}[h!]
\caption{Construction of the word using Ann's strategy of favourite letter}
\begin{algorithmic}[1]\label{minper}
\STATE Let $\tau$ be the Thue-Morse word.
\STATE $Ann_{\tt F}=(\textbf 0,\textbf a)$.		\hfill {\scriptsize \tt//$Ann_{\tt F}$ is the current favourite letter of Ann}
\STATE $Ben_{\tt  M}={\bf read.in()}$.\hfill {\scriptsize \tt//$Ben_{\tt M}$ is the current choice of Ben}
\STATE {\bf Append}$(Ann_{\tt F})$. \hfill {\scriptsize \tt//we add the first character of the word}
\STATE $Ann_{\tt F}=(1-\lceil \frac{Ben_{\tt M}}{2}\rceil,\textbf b)$.	\hfill {\scriptsize \tt//next $Ann_{\tt F}$ must be different from current $Ben_{\tt M}$}
\STATE {\bf Append}$(Ben_{\tt M})$, {\bf Append}$(Ann_{\tt F})$. \hfill {\scriptsize \tt//we add the current characters}
\STATE $Ben_{\tt P}=Ben_{\tt M}$. \hfill {\scriptsize \tt//$Ben_{\tt P}$ is the previous move of Ben}
\STATE $Ben_{\tt M}={\bf read.in()}$.
\STATE $count=3$.		\hfill {\scriptsize \tt//recall the next character of $\tau$ to be used}.
\WHILE {Game Played}
	\IF {$Ann_{\tt F}[1]==\textbf 2$}
		\STATE $Ann_{\tt F}=(1-\lfloor \frac{Ben_{\tt M}[1]}{2}\rfloor Ben_{\tt P}[1] - Ben_{\tt M}[1] (2-Ben_{\tt M}[1]),\tau[count])$.
	\ELSIF {$Ben_{\tt M}==Ann_{\tt F}$}
		\STATE $Ann_{\tt F}=(3-Ben_{\tt P}-Ben_{\tt M},\lfloor \frac{Ben_{\tt P}[2]}{\textbf d}\rfloor \textbf d +(1-\lfloor \frac{Ben_{\tt P}[2]}{\textbf d}\rfloor)\tau[count])$.
	\ELSE 
	\STATE {$Ann_{\tt F}[2]=\tau[count]$} \hfill{\scriptsize \tt{/}{/}update the second component of Ann's favourite}
	\ENDIF
\STATE {\bf Append}$(Ben_{\tt M})$, {\bf Append}$(Ann_{\tt F})$. \hfill {\scriptsize \tt//we add the current characters}
\STATE $Ben_{\tt P}=Ben_{\tt M}$, $Ben_{\tt M}={\bf read.in()}$.
\IF {$Ann_{\tt F}\neq (\textbf 2,\textbf d)$}	
	\STATE  $count=count{+}{+}$. \hfill {\scriptsize \tt//increment counter only if Ann's favourite is not $(\textbf 2,\textbf d)$}
\ENDIF
\ENDWHILE
\end{algorithmic}
\end{algorithm}

Let us first explain the above algorithm. Ann starts by choosing $(\textbf 0,\textbf a)$ as her favourite, and appends it. If the letter that Benn choses as first component is different from $\textbf 0$, then Ann only updates its second component and we append both letters. Otherwise, Ann will also update the first component to $\textbf 1$ (line 5). Both letters are appended. Obviously, the next update of the second component for Ann must consider the third position of the Thue-Morse word, thus we set in line 9 our counter to 3.

Now let us look at the big loop of the game, and consider all situations. 

If Ann has $(\textbf 2,\textbf d)$ as favourite, then the lines 11-12 say that the next choice that she has to make for the first component must be different, while the second component represents the next considered position of the Thue-Morse word. Since the current favourite of Ann was $(\textbf 2,\textbf d)$ it implies that the previous favourite was matched by Ben in the first component. Therefore, the new favourite will be one whose first component is different from the one of the current choice of Ben, and, moreover, different from the first component of his previous choice, whenever his current choice is $(\textbf 2,\textbf d)$.

If Ann has a favourite different from $(\textbf 2,\textbf d)$ and this is matched in the first component by the current choice of Ben, then Ann choses her current favourite such that its first component is different from the first components of both the current and previous choices of Ben (lines 13-14). Observe that in line 14, we know for sure that $0\leq 3-Ben_{\tt P}-Ben_{\tt M} < 3$ since the first component of the choice of Ben matches the one of Ann's favourite, which in turn, will definitely not match the first component of the previous choice of Ben.

If the first component of Ann's favourite is different from $\textbf 2$ and it is not match by Ben, then we only update its second component to the current considered position in the Thue-Morse word.

The loop ends by appending the current letters, updating the previous choice of Ben and reading the new one. The counter traversing the Thue-Morse word is updated only when the current favourite of Ann is different from $(\textbf 2,\textbf d)$.

We are now ready to state our result.
\begin{theorem}
There exists a strategy with finite description for Ann that allows her to win the non-repetitive game of any length on $7$ letters.
\end{theorem}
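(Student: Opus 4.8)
The plan is to show that the word $w = w_1 w_2 \cdots$ produced by Algorithm~\ref{minper} (Ann moving first) and by its mirror image (Ben moving first) never contains a non-trivial square, that is, a factor of the form $xx$ with $|x| \geq 2$; squares $aa$ of a single letter are unavoidable in the game and are excluded by definition. I would write $A = A_1 A_2 \cdots$ for the subsequence of $w$ at the positions played by Ann (the odd positions when Ann starts), and let $\pi_1, \pi_2$ denote the projections of a word over $\Sigma$ onto its first, respectively second, coordinate. Since a factor of a word over $\Sigma$ is a square exactly when both of its coordinate projections are squares at that occurrence, it suffices, for each hypothetical square, to exhibit one coordinate refuting it; the argument then splits on the parity of the period.

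Suppose first that $w$ contains a square of even period $2s$ at position $i$. Shifting the equalities $w_{i+t} = w_{i+2s+t}$ by $2s$ preserves parities, so restricting to Ann's positions and projecting onto the second coordinate yields a square of period $s \geq 1$ in $\pi_2(A)$. But by construction $\pi_2(A)$ is the ternary square-free Thue word $\tau$ with occurrences of the fresh symbol $\mathbf d$ inserted -- the counter \texttt{count} into $\tau$ is frozen exactly at the moves where Ann plays $(\mathbf 2, \mathbf d)$ -- and Ann never plays $(\mathbf 2, \mathbf d)$ twice in a row, so no two inserted copies of $\mathbf d$ are adjacent. I would then invoke a short lemma: a square-free word over an alphabet $\Delta$ remains square-free after the insertion of non-adjacent copies of a symbol not in $\Delta$, since deleting the fresh symbols from a hypothetical square of the result either leaves a square of the original word or exhibits two adjacent fresh symbols. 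This makes $\pi_2(A)$ square-free, a contradiction, so $w$ has no square of even period.

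Suppose next that $w$ contains a non-trivial square of odd period $p$; then $p \geq 3$. Here the second coordinate is useless, since off Ann's positions it is arbitrary, so I would argue on the first coordinate together with Ann's favourite discipline: she repeats her favourite value in $\{\mathbf 0, \mathbf 1, \mathbf 2\}$ at each of her positions until Ben plays that value on his first coordinate, and then switches to a value distinct from Ben's last two first coordinates -- the role of the letter $(\mathbf 2, \mathbf d)$ being precisely to let the favourite take the value $\mathbf 2$, so that three values are always available. Writing $XX = w_i \cdots w_{i+2p-1}$, since $p$ is odd, shifting by $p$ flips parities, so each position $j \in [i, i+p-1]$ played by Ann has its shift $j + p \leq i + 2p - 1$ played by Ben, and the square forces Ben to copy at $j+p$, on his first coordinate, the value Ann placed at $j$, namely her current favourite. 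I would then follow how the favourite evolves over the window $[i, i+2p]$: a maximal block on which it equals a fixed value $f$ forces, through one of these copied positions, Ben to play $f$ on his first coordinate at a position still lying inside that block, triggering a switch before the block ends -- a contradiction. The core is a counting argument: within $2p$ consecutive positions Ann cannot switch often enough to avoid such a collision, and the switch rule -- avoid Ben's last two first coordinates -- never leaves her without a legal value. This is exactly where working at the level of pairs, with the single extra letter $(\mathbf 2, \mathbf d)$, replaces the $3 \times 3$ alphabet of~\cite{GryKosZma15} and brings the count down to $7$. The short periods, the initialization of lines~1--9 (and its mirror), and the mirror algorithm -- treated verbatim with the two parities exchanged -- would be handled by direct inspection. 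Combining the two cases, a non-trivial square has period $\geq 2$; if even it is excluded above, and if odd it is $\geq 3$ and also excluded, so $w$ has no non-trivial square, which proves the theorem.

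The hard part will be the odd-period analysis: making the favourite-tracking argument on the three-valued first coordinate rigorous -- bounding how often Ann switches inside an arbitrary window of length $2p$, showing that Ben's forced repetitions must collide with a block on which the favourite is constant, and checking that the switch rule together with $(\mathbf 2, \mathbf d)$ always leaves Ann a legal move. The non-adjacency of the inserted $\mathbf d$'s in the even-period case is the secondary point to verify.
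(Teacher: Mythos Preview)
Your even-period argument is correct and is exactly the paper's: project Ann's subsequence onto the second coordinate, observe it is the Thue--Morse word with isolated insertions of the fresh letter $\mathbf d$, and invoke the insertion lemma. Nothing to add there.

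The odd-period part is where your plan diverges from the paper and, as sketched, has a real gap. Your mechanism is: take a maximal block on which Ann's favourite equals $f$; some Ann-position $j$ in the first half lies in this block; the square forces Ben to play $f$ at $j+p$; if $j+p$ lies in the same block this triggers a switch, contradicting maximality. But nothing guarantees $j+p$ lies in the block. If the block has length less than $p$ --- and a priori Ann could switch many times inside the window --- then $j$ and $j+p$ never sit in the same block, and the collision you need simply does not occur. Your fallback (``a counting argument: Ann cannot switch often enough'') is not substantiated: each switch is triggered by one move of Ben, Ben has $p$ moves, and you have given no reason why the number of switches must be small.

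The paper closes the odd case by a different and sharper observation. Since Ann's first coordinate always differs from Ben's immediately preceding first coordinate, and since shifting by the odd period $p$ swaps who plays each index of $x$, \emph{every} adjacency $(x_{k-1},x_k)$ has Ann on the later side in one of the two halves; hence $x_{k-1}[1]\neq x_k[1]$ for all $k$, cyclically. From this one reads off directly that Ben matches Ann's favourite at most once across the whole window, so Ann switches at most once. The proof then splits on whether Ben or Ann fills $w_{i+1}$: in the first case Ann never switches and all $x_k[1]$ coincide, contradicting the inequalities; in the second case Ann switches exactly once, and one shows that either $n\le 3$ (checked by hand) or Ben's move just before the switch must be $(\mathbf 2,\mathbf d)$, forcing $x_{n-1}=(\mathbf 2,\mathbf d)$ to reappear as Ann's last move in the second half --- which would require a second switch, a contradiction. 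Your sketch captures neither the ``all consecutive first coordinates differ'' step, nor the Ben-starts/Ann-starts case split, nor this specific use of $(\mathbf 2,\mathbf d)$ beyond ``a third value is available''; these are precisely the ingredients that make the odd-period argument go through.
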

\begin{proof}
Assume that following this strategy there exist two consecutive repetitions of a word with length greater than $1$. Consider the following claim.

\medskip\noindent{\bf Claim.}\qquad
Any square-free word remains square-free after insertions of a new letter, as long as no unary squares are created.\\
{\bf Proof:}\qquad
The result is trivial since if a square is created, this would remain a square after deleting all occurrences of a letter from both sides.
\hfill\qed\medskip

Since Ann does not have the letter $(\textbf 2,\textbf d)$ as a favourite for two consecutive rounds, the previous claim establishes that if such a word exists, then it cannot have even length. Let us denote such a word by $x_1x_2\cdots x_n$ for some odd integer $n>2$. Moreover, let us denote by $i$ a position where such a repetition occurs in the word obtained while playing the game, namely $w$. Hence
$$w_{i+1}w_{i+2}\cdots w_{i+2n}=x_{1}x_{2}\cdots x_{n}x_{1}x_{2}\cdots x_{n}$$
Furthermore, observe that for every position $k$ with $1\leq k \leq n$ we have that $x_k[1]\in\{\textbf 0,\textbf 1,\textbf 2\}$ and $x_k[2]\in\{\textbf a,\textbf b,\textbf c,\textbf d\}$.
 
Assume that Ben starts. Then, following our strategy we have that 
\begin{equation*}
\begin{split}
w_{i+2}[1]\neq w_{i+1}[1], w_{i+4}[1]\neq w_{i+3}[1], \ldots,  w_{i+n-1}[1]\neq w_{i+n-2}[1],\\
w_{i+n+1}[1]\neq w_{i+n}[1], w_{i+n+3}[1]\neq w_{i+n+2}[1], \ldots,  w_{i+2n}[1]\neq w_{i+2n-2}[1],\\
\end{split}
\end{equation*}

Since $x_k=w_{i+k}=w_{i+k+n}$, we conclude that $x_k[1] \neq x_{k+1}[1]$ for every integer $k$ with $1\leq k < n$ and $x_1[1]\neq x_n[1]$. This implies that Ann was not constrained in fact to ever change her first letter, and therefore, throughout the factor she had only one favourite. Furthermore, Ben has used only letters different from the one that Ann used. This easily leads to a contradiction as every $x_k[1]$ is once chosen by Ann, and once by Benn.

Assume now that Ann puts the letter on position $w_i$. We once more conclude that $x_k[1] \neq x_{k+1}[1]$ for every integer $k$ with $1\leq k < n$, but in this case $x_1[1]$ might be equal to $x_n[1]$. In particular, the letter on position $w_{i+n+1}$ is chosen by Ben, and this has to match the one on position $w_i$. Therefore, Ann will change once, and only once, in the game her choice of letter. It easily follows that in this case either the length of the word that is squared is at most 3, or the choice of letter for Ben that preceded the enforced change of favourite of Ann is $(\textbf 2,\textbf d)$. Indeed if $n>3$ and $Ben_{\tt P}\neq (\textbf 2,\textbf d)$, then there is going to be another change of letters for Ann, which is a contradiction with the uniqueness of the change.

It is not difficult to check that there is no word of length $3$ whose square appears in our construction. For the second case, observe that since the last letter that Benn chose in the first half of the square must be $(\textbf 2,\textbf d)$, namely $x_{n-1}=(\textbf 2,\textbf d)$, we conclude that the last letter that Ann choses in the second half of the square, must be the same. However, in order for this to happen there must exist a second change of favourite for Ann, which is again a contradiction with the uniqueness of the choice.
\qed\end{proof}

\bibliographystyle{abbrv}
\bibliography{ref-Thue_games}

\begin{thebibliography}{1}

\bibitem{GryKosZma15}
J.~Grytczuk, K.~Kosi\'nski, and M.~Zmarz.
\newblock How to play {T}hue games.
\newblock {\em Theoretical Computer Science}, 582:83 -- 88, 2015.

\bibitem{GryKozMic13}
J.~Grytczuk, J.~Kozik, and P.~Micek.
\newblock New approach to nonrepetitive sequences.
\newblock {\em Random Struct. Algorithms}, 42(2):214--225, 2013.

\bibitem{Peg11}
W.~Pegden.
\newblock Highly nonrepetitive sequences: Winning strategies from the local
  lemma.
\newblock {\em Random Struct. Algorithms}, 38(1-2):140--161, 2011.

\bibitem{Thu12}
A.~Thue.
\newblock \"{U}ber die gegenseitige {L}age gleicher {T}eile gewisser
  {Z}eichenreihen.
\newblock {\em Norske Vid. Selsk. Skr. I, Mat. Nat. Kl. Christiania}, 1:1--67,
  1912.
\newblock (Reprinted in {\it Selected Mathematical Papers of Axel Thue}, T.
  Nagell, editor, Universitetsforlaget, Oslo, Norway (1977), pp.~413--478).

\end{thebibliography}
\end{document}